\DeclareMathOperator*{\argmin}{arg\,min}
\newtheorem{theorem}{Theorem}
\begin{document}

\title{On Bandlimited Spatiotemporal Field Sampling with Location and Time
Unaware Mobile Senors}

\author{Sudeep Salgia and Animesh Kumar}

\maketitle

\begin{abstract}

Sampling of physical fields has been a topic that been studied extensively in
literature but has been restricted to a small class of fields like temperature
or pollution which are essentially modelled by the standard second order partial
differential equation for diffusion. Furthermore, a large number of sampling
techniques have been studied from sensor networks to mobile sampling under a
variety of conditions like known and unknown locations of the sensors or
sampling locations and with samples affected by measurement and/or quantization
noise. Also, certain works have also addressed time varying fields incorporating
the difference in known timestamps of the obtained signals. 

It would be of great interest to explore fields which are modelled by more
general constant coefficient linear partial differential equations to address a
larger class of fields that have a more complex evolution. Additionally, this
works address an extremely general and challenging problem, in which such a
field is sampled using an inexpensive mobile sensor such that both, the
locations of the samples and the timestamps of the samples are {\em unknown}.
Moreover, the locations and timestamps of the samples are assumed to be
realizations of two independent {\em unknown} renewal processes. Furthermore,
the samples have been corrupted by measurement noise. In such a challenging
setup, the mean squared error between the original and the estimated signal is
shown to decreasing as $O(1/n)$, where $n$ is the average sampling density of
the mobile sensor.
\end{abstract}

\begin{IEEEkeywords}
Additive white noise, partial differential equations, nonuniform sampling, signal reconstruction, signal sampling, time-varying fields
\end{IEEEkeywords}

\section{Introduction} 
\label{sec:introduction}
Sampling of smooth spatiotemporally varying fields is a problem that has been
addressed in literature for multiple reasons. Often the aim has been to estimate
the sources in a diffusion field while some papers have addressed the problem of
estimating the field from the samples. Classical approaches towards this study
have generally involved samples from distributed sensor networks corrupted with
measurement noise and often assume that the time instants of the measurements
are precisely known or have a control over them. Sampling using a mobile sensor
has been a problem that has received attention of late. This problem also has
been well studied in literature for temporally fixed fields with samples from
precisely known sampling locations to unknown locations. Also, certain works do
consider a time variation of field and have addressed generally using known
sampling locations and known time stamps.

Extensive research has been done in sampling fields which can be modelled using
the diffusion equation. In fact, almost all models studying spatiotemporal
fields assume a model of the field which is evolving according to the standard
diffusion equation. However, to the best knowledge of the authors, there has
been no work in sampling fields governed by any linear partial differential
equation (PDE) with constant coefficients. It is important to note here that,
all PDEs are not good models for physical fields. Thus, PDEs that are under
consideration here are the ones which can possibly model a physical field. An
important criteria here is that the energy has to be finite and typically
decreasing with time due to finite support considerations and inherent
"diffusive" nature. This be will quantified in a later section of this work.
Furthermore, sampling of fields varying with time has generally been studied in
specific, constrained environments like uniform sampling, (in space or time or
both), or non uniform sampling with precisely known spatial locations and time
stamps, or unknown locations of either a very slowly varying fields or at known
time instants. The primary motivation of this work is to analyze sampling in a
highly generalized setup of any physical field. Such scenarios, are rather
common in real world. An example is sampling of a pollution field using an
inexpensive device. Often adding precision of knowledge of location and time to
sampling system leads to a considerable increase in cost and hence often
inexpensive devices is used which can record the location or the time stamps of
the samples. Modelling realistic scenarios of sampling fields without the
knowledge of spatial locations or time instants is another important aspect of
motivation behind this work.

This work considers a very general model of a smooth field with finite support
that is evolving according a known, linear partial differential equation with
constant coefficients. For mathematical tractability, field is assumed to be one
dimensional but is evolving with time according to the known PDE. The smoothness
of the field is modelled by is spatial bandlimitedness. However, the field need
not be bandlimited in time, which in fact, is often the case. It is important to
note here that if a field and its certain number of temporal derivatives are
known to be spatially bandlimited at $t = 0$, then it can be concluded that the
field will be always be bandlimited if it evolves according to the given PDE.
The number of the temporal derivatives depends on the degree of the time
derivative in the PDE and result has been proven in this work (Appendix A).
Also, it is considered that location and time stamps of {\em all} samples are
{\em unknown} and are realizations of two independent {\em unknown} renewal
processes. Also, the samples are assumed to be corrupted with measurement noise
which is independent of all other processes and is assumed to have zero mean and
a finite variance. There are no other assumptions about the nature of the noise
or its statistics. The primary way of reducing error in this setup would be
oversampling, like any other setup involving spatial sampling with unknown
locations. However, it is important to note that the oversampling is only in the
spatial domain and not in time domain. The number of samples will solely be
governed by spatial sampling density.

The main result of this paper is that if a spatially bandlimited field evolving
according to a constant coefficient linear PDE, is sampled such that sampling
locations and instants are {\em unknown} and are obtained from {\em unknown}
renewal processes that are independent, then the mean square error between the
estimated field from the noisy samples and the original field at $t = 0$
decreases as $O(1/n)$, where $n$ is the average sampling density, that is, the
expected number of samples over the support of the field.

\begin{figure}[!htb]
\centering
\includegraphics[width =0.4\textwidth, angle = 0]{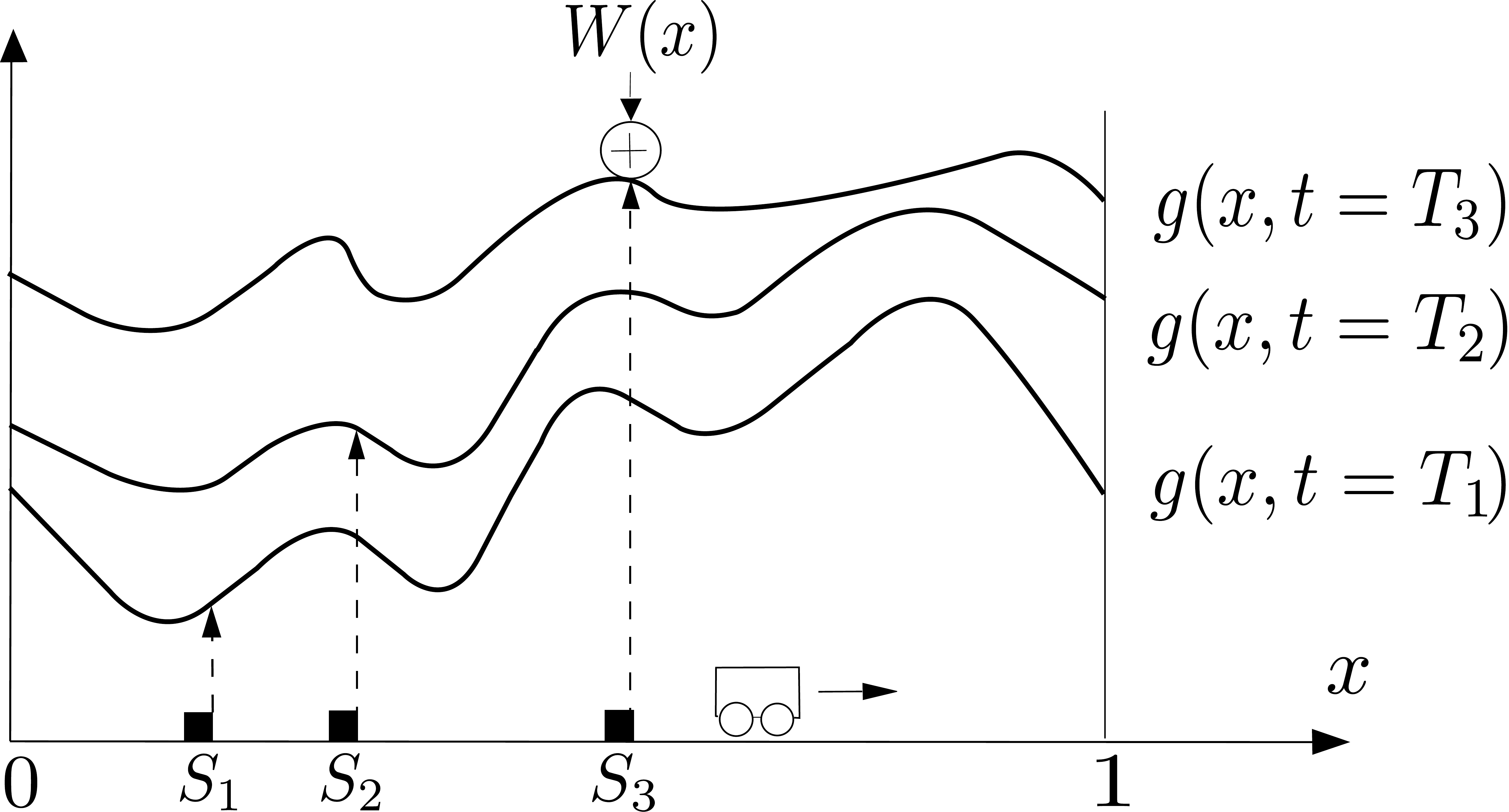}
\caption{\label{Fig:FieldDesciption} The mobile sampling scenario under study
is illustrated. A mobile sensor collects the spatial field's values at unknown
locations denoted bt $S_1, S_2, \dots$. Note how that samples are obtained at
different time instants, which are also unknown. The field is evolving with time
and hence were are getting samples of technically different field equations.
This can be seen in the illustration. It is also assumed that the samples are
affected by additive and independent noise process $W(x)$. Our task is to
estimate $g(x,t)$ from the readings $g(S_1, T_1)+W(s_1), \ldots ,g(S_m,
T_m)+W(s_m)$.}
\end{figure}

 {\em Prior Art}: Estimating a spatiotemporally varying field or sources in a
diffusive field has been a problem addressed in literature. Classically, the
problem involved estimating sources in a diffusive field from distributed sensor
networks, which often requires solving an inverse problem, i.e., inferring
certain characteristics of the field like its distribution at any instant, from
a small number of samples of the field. What makes such problems difficult is
the fact that these inverse problems involving diffusion equation are known to
be severely ill-conditioned \cite{ill_conditioned}. Several works have thus
looked at different forms to regularize the problem. Nehorai~et~al.~\cite{Nehorai} invoked spatial sparsity of sources and studied the detection and
the localization of a single vapor-emitting source using a maximum likelihood
estimator. Two different approaches to the reconstruction of a sparse source
distributions, one involving spatial super-resolution\cite{super_res} and other
on an adaptive spatio-temporal sampling scheme\cite{sparse_sources}, were
introduced by Lu and Vetterli. Another method\cite{localized_sources} using
Prony's method has been proposed. Ranieri~et~al.~\cite{Ranieri} employed
compressed sensing on a discrete grid to estimate the field which has been
extended to real line\cite{Dokmanic}. Apart from the standard diffusion PDE, the
Poisson PDE has also been studied and solutions to that using finite elements
has been also proposed\cite{FEM}, \cite{FEM_1}. The scenario when spatial
sparsity is not realistic has also been studied\cite{aliasing}.  Sampling using
a mobile sensor has been a topic of recent interest\cite{Vett_samp_orig1},
\cite{Vett_samp_orig2}. Estimating fields using mobile sampling has been a
well-studied problem. This problem reduces to the classical sampling and
interpolation problem (as described in \cite{samp_interp_1},
\cite{samp_interp_2}, \cite{samp_interp_3} ), if the samples are collected on
{\em precisely known} locations in absence of any measurement noise. A more
generic version with precisely known locations, in presence of noise, both
measurement and quantization, has also been addressed (refer \cite{known_loc_1}
- \cite{known_loc_6}). Sampling and reconstruction of bandlimited signals from
samples taken at unknown locations has also been studied using a number of
variations of sampling models (see \cite{prior_art_1} - \cite{prior_art_6}).
This work is different from all previous works in the following ways: (i) The
field is considered to be evolving with a known constant linear PDE, which need
not be the diffusion equation. It can be any linear PDE as long as it is a
feasible model for a physical field. (ii) Both the sampling locations and the
timestamps of the samples are unknown, unlike previous works where either fields
are considered to be temporally fixed or time stamps are assumed to be known.

{\em Notation}: The spatiotemporally varying field will be denoted as $g(x,t)$.
The gradient of $g$ is defined as $\nabla g = \left[\frac{\partial}{\partial x}
g(x,t), \frac{\partial}{\partial t} g(x,t)\right]$. $n$ denotes the average
sampling density, while $M$ is the random variable which denotes the number of
samples taken over the support of the field. All vectors will be denoted in
bold. The $\mathcal{L}^{\infty}$ norm of a vector $\mathbf{x}$ will be denoted
by $||\mathbf{x}||_{\infty}$. The expectation operator will be denoted by
$\mathbb{E}[.]$. The expectation is over all the random variables within the
arguments. The trace of a matrix $A$ will be denoted by $\text{tr}(A)$. The set
of natural number, integers, reals and complex numbers will be denoted by
$\mathbb{N}, \mathbb{Z}, \mathbb{R}$ and $\mathbb{C}$ respectively. Also, $j =
\sqrt{-1}$.

\section{Field, Sampling and Noise Model, Distortion Critertia}
\label{sec:field_sampling_and_noise_model_distortion_critertia}

\subsection{Field Model}
\label{sub:field_model}

The field is considered to be spatially smooth over a finite support, one
dimensional in space and evolving with time according to a Partial Differential
Equation. The PDE is linear with constant coefficients and is assumed to be
known and given by
\begin{align}\label{diff_eq}
\sum_{i = 0}^m p_i \frac{\partial^i}{\partial t^i} g(x,t) = \sum_{i = 0}^{m'}
q_i \frac{\partial^i}{\partial x^i} g(x,t) 
\end{align}
where, $\displaystyle \frac{\partial^0}{\partial y^0} f(y) = f(y)$. This will be
represented throughout the paper in the terms of polynomials. Define two
polynomials as 
\begin{align}\label{polynomials}
p(z) = \sum_{i = 0}^m p_i z^i \ ; \ q(z) = \sum_{i = 0}^{m'} q_i z^i
\end{align}
where the coefficients are same as in the differential equation. If for notation
purposes, $\displaystyle \left( \frac{\partial}{\partial z} \right)^l = \left(
\frac{\partial^l}{\partial z^l} \right)$, then the original equation can be
written as 
\begin{align}\label{diff_eq_polynomial}
p\left(\frac{\partial}{\partial t}\right) g(x,t) =
q\left(\frac{\partial}{\partial x}\right) g(x,t)
\end{align}
To incorporate the smoothness of the field, the field is assumed to be
bandlimited. It is important to note we need to ensure that the field is
bandlimited as it evolves with time. Intuitively speaking, this condition
should hold true if we have know the field is spatially bandlimited at $t = 0$,
because time evolution is unlikely to affect the spatial bandlimitedness.
Formally speaking, if the degree of the polynomial $p$ is $m$, then if $m - 1$
partial derivatives of $g(x,t)$ along with $g(x,t)$ are spatially bandlimited
then the function $g(x,t)$ will be spatially bandlimited $\forall t \geq 0$ if
the field evolves according to the given PDE. A detailed proof of this has been
given in Appendix A. We will assume that such a condition holds for the field
under consideration. This helps us ensure that field will always be bandlimited.
Since the field considered has a finite support, assumed WLOG as $[0,1]$, it can
be represented as
\begin{align}\label{gxt}
g(x,t) = \sum_{k = -b}^b a_k(t) \exp (j 2 \pi k x) \ ; \ a_k(t) = \int_{0}^{1}
g(x,t) \exp(-j 2\pi k x) dx
\end{align}
%
%
Also the field is assumed to be bounded. That is, $|g(x,t)| \leq 1 \ \forall
(x,t)$. 

\subsection{Distortion Criteria}
\label{sub:distortion_criteria}

To measure the distortion, we will use a simple mean squared error between the
estimated field and the actual field. All measurements will be considered for $t
= 0$ i.e., the mean square error will be considered between the estimated field
at $t = 0$ and the actual field at $t = 0$. Let the estimated field be
$\hat{G}(x,t)$ and its Fourier coefficients be $\hat{A}[k]$, then the distortion
criterion is defined as 
\begin{align}
\mathcal{D}\left[\hat{G},g\right] & = \mathbb{E}\left[\int_0^1 |\hat{G}(x,t) -
g(x,t)|^2\right] \bigg|_{t=0} \\
& =  \mathbb{E}\left[\sum_{k = -b}^b |\hat{A}_k(t) - a_k(t)|^2 \right]
\bigg|_{t=0} \\
& =  \sum_{k = -b}^b \mathbb{E}\left[ |\hat{A}_k(0) - a_k(0)|^2 \right] 
\end{align}
%


\subsection{Sampling Model} 
\label{sub:sampling_model}

The sampling model is in this case a renewal process based sampling model,
similar to the one in \cite{unknown_loc}. Let $X_1, X_2, \dots $ denote the
intersample distances and $N_1, N_2, \dots $ denote the intersample time
intervals. The sampling model employed assumes the spatial and the temporal
separations as realizations of two independent renewal processes. In other
words, $X_1, X_2, \dots$ are i.i.d.~random variables having a common
distribution $X >0$ and $N_1, N_2, \dots$ are also i.i.d.~random variables
having a single common distribution $N > 0$, such that $X_i$ and $N_j$ are
independent random variables for all values of $i, j \in \mathbb{N}$. Using
these intersample distances, the sampling locations, $S_n$, are given by
$\displaystyle S_n = \sum_{i =1}^n X_i$. The sampling is done over the interval
$[0,1]$, the support of the field, and $M$ is the random number of samples that
lie in the interval i.e. it is defined such that, $S_M \leq 1$ and $S_{M+1} >
1$. Thus $M$ is a well defined measurable random variable\cite{meas_var}. Note
that, the number of samples in the interval only depend on the spatial sampling
density and not on the temporal counterpart.  

For the purpose of ease of analysis and tractability, the support of the
distributions of $X$ and $N$ are considered to be finite and inversely
proportional to the sampling density. Hence, it is assumed that 
\begin{align}
0 < X \leq \frac{\lambda}{n}, 0 < N \leq \frac{\mu}{n} \text{  and  }
\mathbb{E}[X] = \mathbb{E}[N] = \frac{1}{n},
\end{align}
where $\lambda, \mu > 1$ are parameters that characterize the support of the
distributions. Both are finite numbers, independent of the average sampling
density $n$ and much smaller than $n$ i.e., $\lambda, \mu \ll n$. These would be
important factors that govern the constant of proportionality in the expected
error of the estimate. Furthermore, $\lambda$  is an important factor also to
determine the threshold on the minimum number of samples. Applying Wald's
identity\cite{meas_var}, on $S_{M+1}$,
\begin{align}\label{bound_on_EM}
\mathbb{E}[M + 1]\mathbb{E}[X] & = \mathbb{E}[S_{M+1}] \\
 (\mathbb{E}[M] +  1) \frac{1}{n} & = \mathbb{E}[S_{M+1}] \nonumber \\
 \mathbb{E}[M] &= n\mathbb{E}[S_{M+1}] - 1 
\end{align}
By definition, $S_{M + 1} > 1$ and $S_M \leq 1$. Since $S_{M +1} = S_M +
X_{M+1}$, therefore, $S_{M+1} \leq 1 + X_{M+1} \leq 1 + \frac{\lambda}{n}$. Use
these inequalities with equation (\ref{bound_on_EM}), to obtain,
\begin{align}\label{bounds_on_EM}
n - 1 < \mathbb{E}[M] \leq n + \lambda - 1
\end{align}
Also, the bound on each $X_i \leq \frac{\lambda}{n}$, along with $S_{M + 1} > 1$
gives,
\begin{align}\label{upper_bound_on_M}
(M + 1)\frac{\lambda}{n} > 1 \text{ or } M > \frac{n}{\lambda} - 1
\end{align}
Note that all the bounds on the number of samples, a random variable, is
characterized solely in terms of $\lambda$, the parameter that defines support
for the spatial renewal process. There is no involvement of the temporal renewal
process as expected. The only assumption on the sampling model with respect to
time is that it has been assumed that all samples are collected within some time
$T_0$, which is known. That is to say, $T_M \leq T_0$ and $T_{M+1} > T_0$. Note
that this value is variable. This value is important in general to be known in
sampling scenarios especially in case of time varying fields because if it is
too large then the field has likely decayed to a very small value which can lead
to erroneous results. Thus, the knowledge of this is assumed. Since $nN \leq \mu
\ll n$, we expect that $T_0 \ll M$.

\subsection{Measurement Noise Model}
\label{sub:measurement_noise_model}

It will be assumed that the obtained samples have been corrupted by additive
noise that is independent both of the samples and of both the renewal processes.
For simplicity, the noise is considered to be varying only spatially. That is,
at all time instants, the distribution of the noise remains the same, which is
assumed to be {\em unknown} in this work. Hence, $W(x,t) \equiv W(x) $. Thus,
the samples obtained would be sampled versions of $g(x, t) + W(x, t)$, where
$W(x,t) \equiv W(x)$ is the noise. Also, since the measurement noise is
independent, that is for any set of measurements at distinct points $s_1, s_2,
s_3, \dots s_n$, the samples $W(s_1), W(s_2), W(s_3), \dots W(s_n)$ would be
independent and identically distributed random variables. Note that the sampling
instants have not been considered because of the distribution being temporally
static. The only statistics known about the noise  are that the noise is zero
mean and has a finite variance, $\sigma^2$.

\section{Field Estimation from the Obtained Samples}
\label{sec:field_estimation_from_samples}

This section will mainly deal with the estimation of the field from the samples
whose locations and time stamps come from two {\em unknown} independent renewal
processes. Before that, it is essential to analyse the development of the field
under the differential equation. Using the fact that Fourier series are linear
in coefficients, and combining the equations \eqref{gxt} and \eqref{diff_eq},
and using the orthogonality of Fourier basis, we can write,

\begin{flalign}
 && \sum_{i = 0}^m p_i \frac{\partial^i}{\partial t^i} \left( \sum_{k = -b}^b
a_k(t) \exp (j 2 \pi k x) \right) & = \sum_{i = 0}^n q_i
\frac{\partial^i}{\partial x^i} \left(  \sum_{k = -b}^b a_k(t) \exp (j 2 \pi k
x) \right) &&  \nonumber \\
 &&  \sum_{k = -b}^b \left( \sum_{i = 0}^m p_i \frac{\partial^i a_k(t)}{\partial
t^i} \right) \exp (j 2 \pi k x) & = \sum_{k = -b}^b a_k(t) \left( \sum_{i = 0}^n
q_i (j 2\pi k )^i \right) \exp (j 2 \pi k x) && \nonumber \\
 && \sum_{k = -b}^b \left( \sum_{i = 0}^m p_i \frac{\partial^i a_k(t)}{\partial
t^i} \right) \exp (j 2 \pi k x) & = \sum_{k = -b}^b a_k(t) q(j 2 \pi k) \exp (j
2 \pi k x) && \nonumber \\
 &&  \sum_{i = 0}^m p_i \frac{\partial^i a_k(t)}{\partial t^i} - q(j 2 \pi k)
a_k(t) & = 0 \  \ \forall k = -b, \dots,b &&
\end{flalign}
where (a) follows from \ref{polynomials} and (b) uses the orthogonality proprety
for the Fourier basis. This gives us a differential equation for each $a_k(t)$.
To solve for $a_k(t)$, the general method is adopted and the solution is assumed
to be of the form $e^{rt}$. For each $k$, this leads to the polynomial equation, 
\begin{align}\label{r_poly}
  \sum_{i = 0}^m p_i \frac{\partial^i Ae^{rt}}{\partial t^i}  - q(j 2 \pi k)
Ae^{rt} = 0 \\ 
 \left(\sum_{i = 0}^m p_i r^i  - q(j 2 \pi k) \right) Ae^{rt} = 0 \\
   p(r) - q(j 2 \pi k) = 0 
\end{align} 
The solution for $a_k(t)$ is a of the form $Ae^{rt}$, where $r$ is the root of
the above polynomial and $A$ is a constant independent of $t$. Let the roots of
the above polynomial be $r_1(k), r_2(k), \dots r_m(k)$. Note that the roots of
the polynomial are indexed by $k$ as well, implying there is a set of $m$ roots
for each value of $k$. It is essential here to realise that if the field is a
physically feasible one, then $\Re(r_i(k)) \leq 0 \ ; i = 1,2,\dots, m ; k = -b,
\dots,-1, 0, 1 \dots,b $, that is all roots have a non positive real part.
Generally for all physical fields it has to be strictly less than $0$, but we
are allowing the possibility of sustained oscillating (in time) fields.
Furthermore for simplicity of analysis, all of the roots $r_1(k), r_2(k), \dots
r_m(k)$ are considered to be distinct for a given $k$.\footnote{If there is a
repeated root $r, $then the solution will be of the form $e^{rt}$ and also
$te^{rt}$, which will make the problem very complicated. Such cases can also be
treated in a similar manner that has been described in the paper. To be very
specific, if the repeated root is $0$, it can be easily taken into the given
framework by combining all the repeated terms with it. This is because if $r =
0$, then $te^{rt} = t$, which diverges and hence cannot be the solution for a
physical field. Such nuances have been omitted to simplify the description of
the process.} However, it is possible that $r_i(k_1) = r_j(k_2)$ for some $i, j,
k_1 \neq k_2$. This assumption is realistic enough as generally for physical
fields, $m$ is generally very small thus the chance of repeated roots is lesser.
The condition is similar to the one obtained in control theory, where we want
the poles of the closed loop system to lie in the left half plane. Thus, we can
use criteria like the Routh-Hurwitz condition, to ensure the roots have negative
real parts in our case.

In fact, the solution for $a_k(t)$ can thus be written as a linear combination
of these roots. Thus $a_k(t) = \sum_{i = 1}^m a_{ki}(0)\exp(r_i(k) t)$. The
coefficients have been represented so to maintain consistency of representation
of $a_k(t)$ as a function of time. Also $a_{ki}(0)$ are finite constants
independent of everything else. Let $\alpha_k = \max_{i} |a_{ki}(0)|$.
\begin{align}\label{partial_deriv_time_bound}
\bigg|\frac{\partial}{\partial t}g(x,t)\bigg| & = \bigg|\frac{\partial}{\partial
t}\sum_{i = 1}^m a_{ki}(0)\exp(r_i(k) t)\exp (j 2 \pi k x)\bigg| \\
& = \bigg|\sum_{i = 1}^m a_{ki}(0)r_i(k)\exp(r_i(k) t)\exp (j 2 \pi k x)\bigg|
\\
& \leq \sum_{i = 1}^m \bigg| a_{ki}(0)r_i(k)\exp(r_i(k) t)\exp (j 2 \pi k
x)\bigg| \\
& \leq \sum_{i = 1}^m  | a_{ki}(0)| |r_i(k)| \\
& \leq m \alpha_k R 
\end{align}
\begin{align}\label{spatial_deriv_bound}
\bigg|\frac{\partial}{\partial x}g(x,t)\bigg| & = \bigg|\frac{\partial}{\partial
x} \sum_{k = -b}^b a_k(t) \exp (j 2 \pi k x) \bigg| \\
& = \bigg| \sum_{k = -b}^b a_k(t) j 2 \pi k \exp (j 2 \pi k x) \bigg| \\
& \leq \sum_{i = 1}^m \bigg| a_{ki}(0)j 2 \pi k \exp(r_i(k) t)\exp (j 2 \pi k
x)\bigg| \\
& \leq \sum_{i = 1}^m 2 b \pi | a_{ki}(0)|  \\
& \leq m \alpha_k 2b \pi
\end{align}
%
%
The third step follows from triangle inequality, fourth step uses the fact that
$\Re(r_i(k)) \leq 0$ and the bound on $r_i(k)$ uses the Rouche's theorem.  The
value of $R$ can be expressed in terms of $|p_i|$'s, which are finite and so is
the upper bound.  Using the above expression for $a_k(t)$ and using it in
equation \eqref{gxt} to obtain the value at $(x,t) = (S_n, T_n)$, we get, 
\begin{align}\label{gsntn}
g(S_n,T_n) = \sum_{k = -b}^b \sum_{i = 1}^m a_{ki}(0)\exp(r_i(k) T_n) \exp (j 2
\pi k S_n)
\end{align}
The above equation can be written in a vector notation form. Let $e_{k,i}(x,t) =
\exp(r_i(k) t + j 2 \pi k x)$. Define,
\begin{align}\label{def_a_e}
\mathbf{e}_{k,i}(x,t) := [e_{k,1}(x,t),\ e_{k,2}(x,t), \dots, e_{k,m}(x,t)]
\nonumber\\
\mathbf{a}_k := [a_{k1}(0),\ a_{k2}(0),\ a_{k3}(0),\ \dots,\ a_{km}(0)]
\nonumber\\
\mathbf{a} = [\mathbf{a}_{-b}, \dots \mathbf{a}_{-1}, \mathbf{a}_{0},
\mathbf{a}_{1}, \dots \mathbf{a}_{b}]^T \nonumber\\
\mathbf{e}(x,t) = [\mathbf{e}_{-b}(x,t), \dots, \mathbf{e}_{b}(x,t)]^H 
\end{align}
Observe that $\mathbf{a}$ and $\mathbf{e}(x, t)$ are column vectors, while
$\mathbf{e}_k(x, t)$ and $\mathbf{a}_k$ are row vectors. Since $\Re(r_i(k)) \leq
0$, so $|e_{k,i}(x, t)| \leq 1$. This implies,
\begin{align}\label{bounds_on_ek}
||\mathbf{e}_k(x,t)||^2 = \sum_{i = 1}^m |\exp(r_i(k) t + j 2 \pi k x)|^2 \leq
\sum_{i = 1}^m 1 \leq m \nonumber\\
|| \mathbf{e}(x,t) ||^2 = \sum_{k = -b}^b ||\mathbf{e}_k(x,t)||^2 \leq \sum_{k =
-b}^b m = m(2b + 1)  
\end{align}
Therefore, on using equation \eqref{def_a_e}, equation \eqref{gsntn} can be
rewritten as,
\begin{align}\label{gnstn_new}
g(S_n, T_n) = \mathbf{e}^H(S_n,T_n) \ \mathbf{a}
\end{align}
Recall that the sampling locations ($S_n$) and their respective time stamps
($T_n$) are given by,
\begin{align*} S_1 = X_1, \ S_2 = X_1 + X_2, \dots S_n = \sum_{i =1}^n X_i \\
T_1 = N_1, \ T_2 = N_1 + N_2, \dots T_n = \sum_{i =1}^n N_i
\end{align*}
where $X_1, X_2, \dots X_M$ and $N_1, N_2, \dots, N_M$ are all unknown and both
$X_i$'s and $N_i$'s are drawn from independent distributions. The obtained
samples are value of the field at these locations and instants, that have been
corrupted with noise. The observed values are, thus, $g(S_i, T_i) + W(S_i, T_i)\
i = 1, 2, 3, \dots, M$. Define two vectors
\begin{align}\label{def_gw}
\mathbf{g} = [g_1, g_2, \dots g_M]^T \text{ and } \mathbf{w} = [w_1, w_2, \dots
w_M]^T \\  
\text{ where, }  g_i = g(S_i, T_i) \text{ and }  w_i = W(S_i, T_i) \text{ for }
i = 1, 2, 3, \dots M.
\end{align} 
The motivation behind this is to continue to matrix vector notation and hence
all the samples have been stacked up to form a single vector. The vector that
would be obtained on stacking up the samples would be $\mathbf{g}_s = \mathbf{g}
+ \mathbf{w}$. Combining equation (\ref{def_a_e}) and (\ref{def_gw}), we can
write,
\begin{align}\label{def_Y}
\mathbf{g} = \begin{bmatrix}
\mathbf{e}^H(S_1,T_1) \\ \mathbf{e}^H(S_2,T_2) \\ \vdots \\
\mathbf{e}^H(S_M,T_M) 
\end{bmatrix} \mathbf{a} = Y \mathbf{a} \ ; \text{ where } Y = \begin{bmatrix}
\mathbf{e}^H(S_1,T_1) \\ \mathbf{e}^H(S_2,T_2) \\ \vdots \\
\mathbf{e}^H(S_M,T_M) 
\end{bmatrix}
\end{align}
The main idea behind the reconstruction of the field would be that the sampling
location and time instants are ``near'' to the locations and time instants, had
we sampled uniformly both in time and space for $M$ points. Thus, to incorporate
the same into the formulation, let 
\begin{align}
 s_i = \frac{i}{M}, \text{ and } t_i = \frac{iT_0}{M} \text{ for } 1 \leq i \leq
M 
\end{align}
\begin{align}\label{def_Y0}
\mathbf{g}_0 = [g_{u1}, g_{u2}, \dots g_{uM}]^T ; \ 
Y_0 = \begin{bmatrix}
\mathbf{e}^H(s_1,t_1) \\ \mathbf{e}^H(s_2,t_2) \\ \vdots \\
\mathbf{e}^H(s_M,t_M) 
\end{bmatrix},  
\end{align}
where $g_{ui} = g\left(s_i,t_i\right), i = 1,2, \dots, M$. This implies,
$\mathbf{g}_0 = Y_0 \mathbf{a}\ $. Note that $Y_0$ has Vandermonde
structure\cite{vandermode}.
Now, since we expect that the sampling locations are "near" to the grid points,
we can estimate the Fourier coefficients by assuming that samples have been
obtained by multiplying the Fourier coefficient vector by $Y_0$ instead of $Y$.
The best estimate of the Fourier coefficients, $\hat{\mathbf{a}}$, thus would be
\begin{align}\label{def_a_hat}
\hat{\mathbf{a}} = \argmin_{\mathbf{b}} = ||\mathbf{g}_s - Y_0 \mathbf{b}||^2
\end{align}
It is important to note here that instead of $\mathbf{g}$, we have used
$\mathbf{g}_s$ since that is the best knowledge we have about $\mathbf {g}$.
Since the main way to achieve to estimate the field relies on oversampling, the
sampling density will be generally very large and thus, $n > m(2b +1 )$, making
this problem a standard least squares estimation problem. The solution to this
problem is well known and uses the pseudoinverse of the matrix. Therefore, 
\begin{align}\label{a_hat}
\hat{\mathbf{a}} = (Y_0^H Y_0)^{-1} Y_0^H \mathbf{g}_s \\
\mathbf{a} = (Y_0^H Y_0)^{-1} Y_0^H \mathbf{g}_0
\end{align}
The second equation is obtained in a similar manner. However, it is important
to realize at this point that the first equation is a least-square {\em
estimate} because of the unknown locations and noise while the second equation
is an {\em exact} solution. Having defined all the above quantities, we can go
ahead and estimate the error using the distortion criteria mentioned in the
above section.
\begin{align}
\sum_{k = -b}^b \mathbb{E}\left[\bigg|\hat{A}_k(0) - a_k(0)|^2\right] & =
\sum_{k = -b}^b \mathbb{E}\left[|\sum_{i = 1}^m \hat{A}_{ki}(0) - \sum_{i = 1}^m
a_{ki}(0)\bigg|^2\right]  \nonumber \\
& =  \sum_{k = -b}^b \mathbb{E}\left[\bigg|\sum_{i = 1}^m \big( \hat{A}_{ki}(0)
- a_{ki}(0) \big)\bigg|^2\right]  \nonumber \\
&  \leq \sum_{k = -b}^b \mathbb{E}\left[m \sum_{i = 1}^m \big| \hat{A}_{ki}(0) -
a_{ki}(0) \big|^2 \right] \nonumber \\
& = m \sum_{k = -b}^b \mathbb{E}\left[\big|| \hat{\mathbf{a}}_k - \mathbf{a}_k
\big||^2 \right]  \nonumber\\
& \leq m \sum_{k = -b}^b \mathbb{E}\left[\big|| \hat{\mathbf{a}} - \mathbf{a}
\big||^2 \right] \nonumber \\
& = m(2b +1) \mathbb{E}\left[\big|| \hat{\mathbf{a}} - \mathbf{a} \big||^2
\right]
\end{align}
This is the estimate for the Fourier coefficients of the field and distortion
criteria expressed in that estimate. We will establish the bound on the
estimation error as the main result in this work.

\begin{theorem}
Let $\hat{\mathbf{a}}$ and $\mathbf{a}$ be as defined in equation \eqref{a_hat}.
Under the sampling model discussed and the corruption by the measurement noise,
the following result holds
\begin{align*}
\mathbb{E} \left[ || \hat{\mathbf{a}} - \mathbf{a} ||^2\right] \leq \frac{C'}{n}
\end{align*}
where $n$ is the average sampling density and $C'$ is a positive constant
independent of $n$. It depends on the bandwidth, $b$ of the signal, the support
parameters of the renewal processes, $\lambda$ and $\mu$, the coefficients of
the PDE and the noise variance, $\sigma^2$. The dependence on $b$, $\lambda$,
$\mu$ and $\sigma^2$ is such that if these constant would increase, the
proportionality constant would increase, worsening the bound. The dependence on
the coefficients of the PDE is in a very non linear way through the roots of the
equations whose almost all coefficients are determined by these values.
Correspondingly, the distortion error can be bounded as $\frac{m(2b +1)C'}{n}$
\end{theorem}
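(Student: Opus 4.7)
\emph{Proof plan.} The plan is to decompose $\hat{\mathbf{a}} - \mathbf{a}$ into a contribution from the mismatch between the true sampling nodes $(S_i,T_i)$ and the uniform grid nodes $(s_i,t_i)$, and a contribution from the additive noise $\mathbf{w}$. Since $\mathbf{g}_s = \mathbf{g} + \mathbf{w}$, equation \eqref{a_hat} yields
\begin{align*}
\hat{\mathbf{a}} - \mathbf{a} \;=\; (Y_0^H Y_0)^{-1} Y_0^H (\mathbf{g} - \mathbf{g}_0) \;+\; (Y_0^H Y_0)^{-1} Y_0^H \mathbf{w},
\end{align*}
so $\|u+v\|^2 \leq 2\|u\|^2 + 2\|v\|^2$ reduces the task to bounding the two pieces separately. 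Both reductions will rest on a uniform lower bound $\lambda_{\min}(Y_0^H Y_0) \geq c\,M$ for some $c>0$ that does not depend on $n$, and this will turn out to be the main technical obstacle.

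For the perturbation piece, a first-order Taylor expansion together with the gradient bounds \eqref{partial_deriv_time_bound}--\eqref{spatial_deriv_bound} gives $|g(S_i,T_i) - g(s_i,t_i)| \leq C_1\bigl(|S_i - s_i| + |T_i - t_i|\bigr)$ for a constant $C_1$ depending only on $b$, $m$, the $\alpha_k$'s, and the Rouch\'e bound $R$. A renewal-theoretic argument in the spirit of \cite{unknown_loc}, exploiting $X_i \leq \lambda/n$, $N_i \leq \mu/n$, $\mathbb{E}[X_i]=\mathbb{E}[N_i]=1/n$, and the endpoint identities $s_M=1$, $t_M=T_0$ alongside $S_M \leq 1$, $T_M \leq T_0$, will then give $\mathbb{E}\bigl[\sum_{i=1}^M (S_i-s_i)^2 + (T_i-t_i)^2\bigr] = O(1)$, whence $\mathbb{E}\|\mathbf{g} - \mathbf{g}_0\|^2 = O(1)$. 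Since $\|(Y_0^H Y_0)^{-1}Y_0^H\|_{\mathrm{op}}^2 = 1/\lambda_{\min}(Y_0^H Y_0)$, the spectral lower bound will render this piece $O(1/n)$.

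For the noise piece, conditioning on $(S_i,T_i)$, using independence and $\mathbb{E}[\mathbf{w}\mathbf{w}^H] = \sigma^2 I$, gives
\begin{align*}
\mathbb{E}\bigl\|(Y_0^H Y_0)^{-1}Y_0^H \mathbf{w}\bigr\|^2 = \sigma^2\,\mathbb{E}\bigl[\mathrm{tr}((Y_0^H Y_0)^{-1})\bigr] \leq \sigma^2\,m(2b+1)\,\mathbb{E}\bigl[1/\lambda_{\min}(Y_0^H Y_0)\bigr],
\end{align*}
which is again $O(1/n)$ under the same spectral lower bound. Summing the two contributions produces the theorem, with $C'$ absorbing the dependence on $c$, $C_1$, $\sigma^2$, $b$, $m$, $\lambda$, and $\mu$ exactly as stated.

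The hard part is establishing $\lambda_{\min}(Y_0^H Y_0) \geq cM$. A closed-form evaluation of the geometric progressions shows that $(Y_0^H Y_0)_{(k,i),(k',i')} = \sum_{l=1}^M z^l$ with $z = \exp\bigl((r_i(k) - \overline{r_{i'}(k')})\,T_0/M + j\,2\pi(k-k')/M\bigr)$, so $M^{-1} Y_0^H Y_0$ converges as $M\to\infty$ to a block-diagonal matrix $B$ indexed by $k\in\{-b,\dots,b\}$ whose $k$-th $m\times m$ block has entries $\bigl(e^{(r_i(k)-\overline{r_{i'}(k)})T_0} - 1\bigr)/\bigl((r_i(k)-\overline{r_{i'}(k)})T_0\bigr)$, with the convention that an entry equals $1$ whenever its denominator vanishes. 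Each such block is the Gram matrix of the linearly independent exponentials $\{t \mapsto e^{r_i(k)t}\}_{i=1}^{m}$ in $L^2[0,T_0]$, hence positive definite, and its smallest eigenvalue admits a Cauchy- or Vandermonde-type lower bound in terms of the separations $|r_i(k) - r_j(k)|$ and $T_0$. A Weyl perturbation (or Gershgorin) argument, controlling the $O(1/M)$ remainder uniformly in $k$, will then transfer the bound to $M^{-1} Y_0^H Y_0$ for $n$ large enough, giving $\lambda_{\min}(Y_0^H Y_0) \geq cM$. Inserting this into both pieces above delivers $\mathbb{E}\|\hat{\mathbf{a}} - \mathbf{a}\|^2 \leq C'/n$.
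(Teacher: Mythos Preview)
Your overall architecture matches the paper's: the same decomposition $\hat{\mathbf a}-\mathbf a = A(\mathbf g-\mathbf g_0)+A\mathbf w$ with $A=(Y_0^HY_0)^{-1}Y_0^H$, the same gradient bound on $\|\mathbf g-\mathbf g_0\|$, the same renewal estimates $\mathbb{E}\bigl[\tfrac1M\sum_i(S_i-i/M)^2\bigr]=O(1/n)$ and its temporal analogue (these are exactly the content of the paper's Appendix~B), and the same noise computation $\mathbb{E}\|A\mathbf w\|^2=\sigma^2\,\mathbb{E}[\mathrm{tr}((Y_0^HY_0)^{-1})]$. Where you and the paper genuinely diverge is in controlling $(Y_0^HY_0)^{-1}$: the paper bounds $\mathrm{tr}((Y_0^HY_0)^{-1})$ via the Polya--Szeg\H{o} inequality, a Riemann-sum lower bound $\mathrm{tr}(Y_0^HY_0)\ge C_3 M$, and an external condition-number bound for Vandermonde matrices~\cite{cond_number}; you instead attempt a direct lower bound on $\lambda_{\min}(Y_0^HY_0)$ by identifying the limit of $M^{-1}Y_0^HY_0$ as a Gram matrix and invoking Weyl perturbation.

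There is, however, a concrete error in your limit computation. Because the uniform grid has $t_l=T_0 s_l$, the spatial and temporal exponentials are coupled along a line, so the off-diagonal $(k\neq k')$ blocks do \emph{not} vanish. Explicitly,
\[
(Y_0^HY_0)_{(k,i),(k',i')}=\sum_{l=1}^M z^l,\qquad z=\exp\!\Bigl(\tfrac{1}{M}\bigl[(\overline{r_i(k)}+r_{i'}(k'))T_0+j2\pi(k'-k)\bigr]\Bigr),
\]
and $M^{-1}\sum_{l=1}^M z^l\to (e^{\beta}-1)/\beta$ with $\beta=(\overline{r_i(k)}+r_{i'}(k'))T_0+j2\pi(k'-k)$, which is generically nonzero for $k\neq k'$. (Your expression for $z$ also has a sign/conjugation slip.) So the limit $B$ is the \emph{full} $m(2b{+}1)\times m(2b{+}1)$ Gram matrix of the functions $u\mapsto\exp\!\bigl((r_i(k)T_0+j2\pi k)u\bigr)$ on $L^2[0,1]$, not a block-diagonal one. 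Your positive-definiteness argument then needs the $m(2b{+}1)$ exponents $r_i(k)T_0+j2\pi k$ to be pairwise distinct---a strictly stronger hypothesis than the paper's ``$r_1(k),\dots,r_m(k)$ distinct for each fixed $k$'', which explicitly permits $r_i(k_1)=r_j(k_2)$ across different $k$. Under that extra distinctness (or, equivalently, distinctness of the Vandermonde nodes $\exp((r_i(k)T_0+j2\pi k)/M)$), your Gram/Weyl route goes through and is arguably more self-contained than the paper's Polya--Szeg\H{o} plus black-box condition-number citation; without it, $Y_0$ can be rank-deficient and both your $\lambda_{\min}$ bound and the paper's Vandermonde invocation fail.
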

\begin{proof}
Thus, now we will upper bound $\displaystyle \mathbb{E}\left[\big|
\hat{\mathbf{a}} - \mathbf{a} \big|^2 \right]$ to obtain a bound on distortion.
Letting, $A = (Y_0^H Y_0 )^{-1} Y_0^H$, we can write,
\begin{align}\label{main_step}
 \mathbb{E} \left[ || \hat{\mathbf{a}} - \mathbf{a} ||^2\right] & = \mathbb{E}
\left[ || (Y_0^H Y_0 )^{-1} Y_0^H \mathbf{g}_s  - \mathbf{a} ||^2\right] \\
 & = \mathbb{E} \left[ || (Y_0^H Y_0 )^{-1} Y_0^H (\mathbf{g} + \mathbf{w}) -
(Y_0^H Y_0 )^{-1} Y_0^H \mathbf{g}_0 ||^2\right] \\
 & = \mathbb{E} \left[ || A(\mathbf{g} + \mathbf{w}- \mathbf{g}_0)||^2\right] \\
 & \leq 2\mathbb{E} \left[ || A(\mathbf{g} - \mathbf{g}_0)||^2\right]  +
2\mathbb{E} \left[ || A\mathbf{w}||^2\right]\\
 & \leq 2 \mathbb{E} \left[ \lambda_{\max}^A || \mathbf{g} -
\mathbf{g}_0||^2\right] + 2\mathbb{E} \left[ || A\mathbf{w}||^2\right]\\
\end{align}
where second step follows from equation \eqref{a_hat} and definition of
$\mathbf{g}_s$, the fourth step from Cauchy-Scwharz inequality and
$\lambda_{\max}^A$ is the largest eigenvalue of $A^H A$. Both the terms, along
with the bound on $\lambda_{\max}^A$, in the last step will be analyzed
separately to obtain the bound on the error. Now, 
\begin{align}\label{bound_g}
|| \mathbf{g} - \mathbf{g}_0||^2 & = \sum_{i =1}^M |g(S_i,T_i) - g(s_i, t_i)|^2
\nonumber \\
& \leq \sum_{i =1}^M \left(\max_{\mathbf{x},t} ||\nabla g||_2 \right) \big\{
|S_i - s_i|^2 + |T_i - t_i|^2 \big\} \nonumber\\
& \leq C_0 \bigg\{ \sum_{i = 1}^M \bigg|S_i - \frac{i}{M}\bigg|^2 + \sum_{i =
1}^M \bigg|T_i - \frac{iT_0}{M}\bigg|^2 \bigg\} 
\end{align}
The second step follows from the fact that for a smooth function
$h(\mathbf{x})$, for $\mathbf{x} \in \mathbb{R}^n$, $\displaystyle
|h(\mathbf{x}_1) - h(\mathbf{x}_2)| \leq \left( \max_{\mathbf{x}} ||\nabla h||
\right) ||\mathbf{x}_1 - \mathbf{x}_2||$. The third step uses the fact that
$||\nabla g||_2$ is upper bounded. This follows from the fact that $||\nabla
g||_2^2 = \left(\dfrac{\partial}{\partial x}g(x,t)\right)^2 +
\left(\dfrac{\partial}{\partial t}g(x,t)\right)^2$ along with the bounds on
partial derivatives. From \eqref{partial_deriv_time_bound} and
\eqref{spatial_deriv_bound}, we can write, 
\begin{align}
||\nabla g||_2^2 = \left(\dfrac{\partial}{\partial x}g(x,t)\right)^2 +
\left(\dfrac{\partial}{\partial t}g(x,t)\right)^2 \leq (m \alpha_k R)^2 + (m
\alpha_k R2 b \pi)^2 \leq C_0^2
\end{align}
for some $C_0 \in \mathbb{R}$. 
%
%
\begin{align}\label{lambda_bound1}
\lambda_{\max}^A & \overset{(a)}{\leq} \text{tr} (A^H A) \\
& = \text{tr} (AA^H) \\
& = \text{tr} ((Y_0^H Y_0 )^{-1} Y_0^H Y_0 (Y_0^H Y_0 )^{-1}) \\
& = \text{tr} ((Y_0^H Y_0 )^{-1}) \\
\end{align}
%
(a) follows from the fact that trace of a matrix is the sum of its eigenvalues
and since $A^H A$ is symmetric, all its eigenvalues will be non negative
therefore, the sum will be greater than the largest eigenvalue.  For the second
term in the equation \eqref{main_step}, the structure of the noise model can be
exploited to simplify the expression. Note that using the assumptions on the
noise model, $\mathbb{E}[\mathbf{w}] = 0$ and $\mathbb{E}[\mathbf{ww}^T] =
\sigma^2 I$, where $I$ is the identity matrix.
\begin{align}\label{noise_bound}
\mathbb{E} \left[ || A\mathbf{w}||^2\right] & = \mathbb{E} \left[ \mathbf{w}^T
(A^H A \mathbf{w}) \right] \\
& \overset{(a)}{=} \mathbb{E} \left[ \text{tr} \left(\mathbf{w}^T (A^H A
\mathbf{w})\right) \right]  \\
& \overset{(b)}{=} \mathbb{E} \left[ \text{tr} \left((A^H A \mathbf{w})
\mathbf{w}^T \right)  \right] \\
& \overset{(c)}{=} \text{tr} \left( \mathbb{E} \left[ A^H A \mathbf{w}
\mathbf{w}^T  \right] \right)  \\
& \overset{(d)}{=} \text{tr} \left( \mathbb{E} \left[ A^H A \right] \mathbb{E}
\left[ \mathbf{w} \mathbf{w}^T  \right] \right)  \\
& = \text{tr} \left( \mathbb{E} \left[ A^H A \right] \sigma^2 I \right)  \\
& = \mathbb{E} \left[ \text{tr} (A^H A  \sigma^2 I) \right] \\
& = \mathbb{E} \left[ \sigma^2 \text{tr}(A^H A) \right] \\
& = \sigma^2 \mathbb{E} \left[ \text{tr} ((Y_0^H Y_0 )^{-1}) \right] \\
\end{align}
where, (a) uses the fact that $|| A\mathbf{w}||^2$ is scalar hence, it equals
its trace, (b) follows from $\text{tr}(AB) = \text{tr}(BA)$, (c) uses linearity
of expectation and the trace operator, and (d) is a result of independence of
noise and sampling.
Thus, from equation \ref{main_step}, \ref{lambda_bound1} and \ref{noise_bound},
it is clear that characterizing the bound on $\text{tr} ((Y_0^H Y_0 )^{-1})$ is
required and will also suffice for the purpose.

Let $\lambda_1, \lambda_2, \dots, \lambda_{m(2b+1)}$ be eigenvalues of $Y_0^H
Y_0$. Since the matrix is symmetric, $\lambda_i \geq 0 \ i = 1,2, \dots, m(2b +
1)$. Using the property of eigenvalues, the eigenvalues of $(Y_0^H Y_0)^{-1}$,
will be $\dfrac{1}{\lambda_1}, \dfrac{1}{\lambda_2}, \dots,
\dfrac{1}{\lambda_{m(2b+1)}}$. Let $\lambda_{\max}$ and $\lambda_{\min}$ be the
maximum and minimum eigenvalues of $Y_0^H Y_0$. Therefore,
$\dfrac{1}{\lambda_{\min}}$ and $\dfrac{1}{\lambda_{\max}}$ will be the maximum
and minimum eigenvalues of $(Y_0^H Y_0)^{-1}$. Applying the Polya-Szego
inequality~\cite{pg_ineq} on the sequence formed by eigenvalues of $Y_0^H Y_0$
and by those of $(Y_0^H Y_0)^{-1}$, we can write
\begin{align}
\dfrac{\sum_{i = 1}^{m(2b+1)} \lambda_i \ \sum_{i = 1}^{m(2b+1)}
(1/\lambda_i)}{(\sum_{i = 1}^{m(2b+1)} \sqrt{\lambda_i. 1/\lambda_i})^2} \leq
\frac{1}{4} \left(\frac{\lambda_{\max}}{\lambda_{\min}} +
\frac{\lambda_{\min}}{\lambda_{\max}}\right)^2 
\end{align}
Noting that $\displaystyle \text{tr} (Y_0^H Y_0) = \sum_{i = 1}^{m(2b+1)}
\lambda_i, \text{tr} ((Y_0^H Y_0 )^{-1}) = \sum_{i = 1}^{m(2b+1)} (1/\lambda_i)$
and $\kappa$ is the condition number of the matrix $Y_0^H Y_0$, it can be
written as,
\begin{align}\label{tr_y0h_y0_inv}
 \text{tr} (Y_0^H Y_0) \ \text{tr} ((Y_0^H Y_0 )^{-1}) \leq
\frac{m^2(2b+1)^2}{4} \left(\kappa + \frac{1}{\kappa}\right)^2 \nonumber \\
  \text{tr} ((Y_0^H Y_0 )^{-1}) \leq \frac{m^2(2b+1)^2}{4 \ \text{tr} (Y_0^H
Y_0)} \left(\kappa + \frac{1}{\kappa}\right)^2 
\end{align}
Consider, \begin{align}\label{tr_y0h_y0} \text{tr} (Y_0^H Y_0) & = \text{tr }
\left( [\mathbf{e}(s_1, t_1), \mathbf{e}(s_2, t_2), \dots, \mathbf{e}(s_M, t_M)
] \begin{bmatrix}
\mathbf{e}^H(s_1,t_1) \\ \mathbf{e}^H(s_2,t_2) \\ \vdots \\
\mathbf{e}^H(s_M,t_M) 
\end{bmatrix} \ \right) \nonumber \\
& = ||\mathbf{e}(s_1, t_1)||^2 + ||\mathbf{e}(s_2, t_2)||^2 + \dots +
||\mathbf{e}(s_M, t_M)||^2 \nonumber\\
& \overset{(a)}{=} \sum_{i = 1}^M \sum_{k = -b}^b ||\mathbf{e}_k(s_i, t_i)||^2
\nonumber \\
& \overset{(b)}{=} \sum_{i = 1}^M \sum_{k = -b}^b \sum_{j = 1}^m |e_k^j(s_i,
t_i)|^2 \nonumber\\
& = \sum_{i = 1}^M \sum_{k = -b}^b \sum_{j = 1}^m \exp(2\Re ( r_j(k) ) t_i )
\nonumber \\
& = \sum_{j = 1}^m \sum_{k = -b}^b \sum_{i = 1}^M \exp \left( 2\Re ( r_j(k) )
\frac{iT_0}{M}\right) 
\end{align}
Both (a) and (b) follow directly from equation \eqref{def_a_e}. For a given
value of $j$ and $k$, the sum $\displaystyle \sum_{i = 1}^M \exp \left( 2\Re (
r_j(k) ) \frac{iT_0}{M}\right)$ can be considered as a scalar multiple of
Riemann sum approximation of the integral $\displaystyle \int_0^{T_0} \exp (
2\Re ( r_j(k) ) t) dt$ with partitions chosen uniformly over the interval. The
only difference here that the terms in the sum are not multiplied by the
interval difference, which in this case is the same for all intervals and hence
can be taken out as a scalar. It is interesting to note here that the value of
this integral can be used as a bound on the value of the sum because of that
fact that $\exp ( 2\Re ( r_j(k) ) t)$ is decreasing since $\Re ( r_j(k) ) \leq 0
\ \forall j,k$. The bound can be obtained as,
\begin{align}
\int_0^{T_0} \exp ( 2\Re ( r_j(k) ) t) dt & =  \sum_{i = 1}^M
\int_{\frac{(i-1)T_0}{M}}^{\frac{iT_0}{M}} \exp ( 2\Re ( r_j(k) ) t) dt
\nonumber\\
& \overset{(a)}{\leq} \sum_{i = 1}^M \int_{\frac{(i-1)T_0}{M}}^{\frac{iT_0}{M}}
\exp \left( 2\Re ( r_j(k) ) \frac{(i-1)T_0}{M}\right) dt \nonumber \\
& = \sum_{i = 1}^M  \exp \left( 2\Re ( r_j(k) ) \frac{iT_0}{M}\right) \exp
\left( -2\Re ( r_j(k) ) \frac{T_0}{M}\right)
\int_{\frac{(i-1)T_0}{M}}^{\frac{T_0}{M}}  dt \nonumber \\
& = \exp \left( -2\Re ( r_j(k) ) \frac{T_0}{M}\right)\frac{T_0}{M} \sum_{i =
1}^M  \exp \left( 2\Re ( r_j(k) ) \frac{iT_0}{M}\right)\nonumber \\
& \leq \exp ( -2\Re ( r_j(k) ) )\frac{T_0}{M} \sum_{i = 1}^M  \exp \left( 2\Re (
r_j(k) ) \frac{iT_0}{M}\right)
\end{align}
where (a) follows from the decreasing nature of $\exp ( 2\Re ( r_j(k) ) t)$ and
the last step uses that fact that $T_0 \leq M$. This implies,
\begin{align}\label{riemann_lower_bound}
 \exp \left( 2\Re ( r_j(k) ) \frac{iT_0}{M}\right) \geq M \exp ( 2\Re ( r_j(k)))
\int_0^{T_0} \exp ( 2\Re ( r_j(k) ) t) dt = M C_{jk}
\end{align}
where $\displaystyle C_{jk} = \exp ( 2\Re ( r_j(k) ) ) \int_0^{T_0} \exp ( 2\Re
( r_j(k) ) t) dt$ is a finite constant for each $j,k$. Using equation
(\ref{riemann_lower_bound}) in equation (\ref{tr_y0h_y0}), we get 
\begin{align}\label{final_bound_on_tr}
\text{tr} (Y_0^H Y_0) \geq \sum_{j = 1}^m \sum_{k = -b}^b  M C_{jk} = M C_3 
\end{align}
where $C_3$ is a finite deterministic constant given by $C_3 = \sum_{j = 1}^m
\sum_{k = -b}^b C_{jk}$. Clearly, it is independent of $n$. Since $Y_0$ is a
Vandermonde matrix, using the result in \cite{cond_number} for Vandermonde
matrices with complex entries, such that all the entries lie in the unit circle,
we can also say that the condition number of $Y_0$ is independent of average
sampling density (i.e., $n$) and is upper bounded by a finite constant
independent of its dimension. More specifically, each realization is independent
of $M$, the number of rows in the matrix and would be a finite constant that
does not scale with $M$. Hence, $\kappa$, the condition number of $Y_0^H Y_0$ is
also upper bounded by a finite constant, $C_k > 0$, independent of $n$.

Since $\kappa \geq 1 \ \exists \ K > 0$, such that the term $\displaystyle
\left(\kappa + \frac{1}{\kappa} \right)^2 \leq K $. Combining this results with
ones obtained in \eqref{tr_y0h_y0_inv}~and~\eqref{final_bound_on_tr},
\begin{align} 
\text{tr} ((Y_0^H Y_0 )^{-1}) \leq \frac{m^2(2b+1)^2}{4M C_3} K =
\frac{C_t}{M}\\
\end{align}
From \eqref{upper_bound_on_M}, it is noted that $M > \dfrac{n-
\lambda}{\lambda}$ or $\dfrac{1}{M} < \dfrac{\lambda}{n - \lambda}$. This
implies, $\mathbb{E}\left[\dfrac{1}{M}\right] < \left[\dfrac{\lambda}{n -
\lambda}\right]$. Therefore, 
\begin{align} \label{final_bound_on_tr_inv}
\mathbb{E}\left[ \text{tr} ((Y_0^H Y_0 )^{-1}) \right] \leq \mathbb{E}\left[
\frac{C_t}{M} \right] \leq \frac{C_t \lambda}{n - \lambda}
\end{align}
Substituting the results obtained above in equations \eqref{lambda_bound1} and
\eqref{noise_bound} and combining them with the equations \eqref{main_step} and
\eqref{bound_g}, we get, 
\begin{align}
\mathbb{E} \left[ || \hat{\mathbf{a}} - \mathbf{a} ||^2\right] \leq
2\mathbb{E}\left[ \frac{C_t}{M} \bigg\{ \sum_{i = 1}^M \bigg|S_i -
\frac{i}{M}\bigg|^2 + \sum_{i = 1}^M \bigg|T_i - \frac{i}{M}\bigg|^2 \bigg\}
\right] + \frac{2C_t \lambda}{n - \lambda}
\end{align}
From Appendix~B, it is noted that,
\begin{align*}
\mathbb{E}\left[ \frac{1}{M}\sum_{i = 1}^M \bigg|S_i - \frac{i}{M}\bigg|^2
\right] \leq \frac{C_S}{n} \text{ and } \mathbb{E}\left[ \frac{1}{M}\sum_{i =
1}^M \bigg|T_i - \frac{iT_0}{M}\bigg|^2 \right] \leq \frac{C_T}{n}
\end{align*}.
Therefore, we can conclude that 
\begin{equation}
\mathbb{E} \left[ || \hat{\mathbf{a}} - \mathbf{a} ||^2\right] \leq \frac{C_t
C_S}{n} + \frac{C_t C_T}{n} + \frac{C_t \lambda}{n - \lambda} \leq \frac{C'}{n}
\end{equation}
This completes the proof.
\end{proof}

\section{Simulations} 
\label{sec:simulations}

This section presents the results of simulations and explanations to the same.
The simulations have been presented in Fig. . The simulations help ease out
verification of the results obtained of different PDEs and analyze the effect of
sampling density.

Firstly, for the purpose of analysis, a field $g(x, t)$ with $b = 3$ is
considered and its Fourier coefficients have been generated using independent
trials of a Uniform distribution over $[-1,1]$ for all real and imaginary parts
separately. It is ensured that the field is real by using conjugate symmetry.
Finally, the field is scaled to have $|g(x)| \leq 1$. Three differential
equations have been considered for the purpose and Fourier coefficients have
been reused. The simulations have been carried out using the following PDEs
\begin{flalign}\label{PDEs}
&&\frac{\partial^2}{\partial t^2}g(x,t) + 3\frac{\partial}{\partial t}g(x,t) & =
0.01 \left( \frac{\partial^2}{\partial x^2}g(x,t) -
0.125\frac{\partial^4}{\partial x^4}g(x,t) \right)&& \\
&&\frac{\partial^2}{\partial t^2}g(x,t) + 3\frac{\partial}{\partial t}g(x,t) & =
0.01 \frac{\partial^2}{\partial x^2}g(x,t) && \\
&&\frac{\partial}{\partial t}g(x,t) & = 0.01 \frac{\partial^2}{\partial
x^2}g(x,t) && 
\end{flalign}
The corresponding polynomials are (i) $p_1(z) = z^2 + 3z, q_1(z) = 0.01(z^2 -
0.0125z^4)$, (ii) $p_2(z) = z^2 + 3z, q_2(z) = 0.01z^2$, and (iii) $p_3(z) = z,
q_3(z) = 0.01z^2$. Note that the last one is the diffusion equation.  The same
two sets of Fourier coefficients were used in first two equations. Note that the
others are conjugate of these to ensure real field. 
\begin{flalign}
&& a_1[0] = 0.3002;      \ &   a_2[0] = 0.2445;  \nonumber &&\\
&& a_1[1] = -0.0413 + j0.0216; \ & a_2[1] = -0.0357 + j0.0478;  \nonumber && \\
&& a_1[2] = 0.0871 + j0.0343; \ & a_2[2] = 0.0978 + j0.0729; \nonumber && \\
&& a_1[3] = -0.1679 - j0.0586;  \ &  a_2[3] = -0.1796 - j0.0756; &&
\end{flalign}
The Fourier coefficients used in the last equation are 
\begin{flalign}
&& a[0] &= 0.11 && \nonumber \\
&& a[1] &= 0.023 - j0.076 && \nonumber \\
&& a[2] &= 0.0669 + j0.0551 && \nonumber \\
&& a[3] &= 0.2 + j0.0821 && 
\end{flalign}

\begin{figure*}[!htb]
\centering
\includegraphics[width =6in]{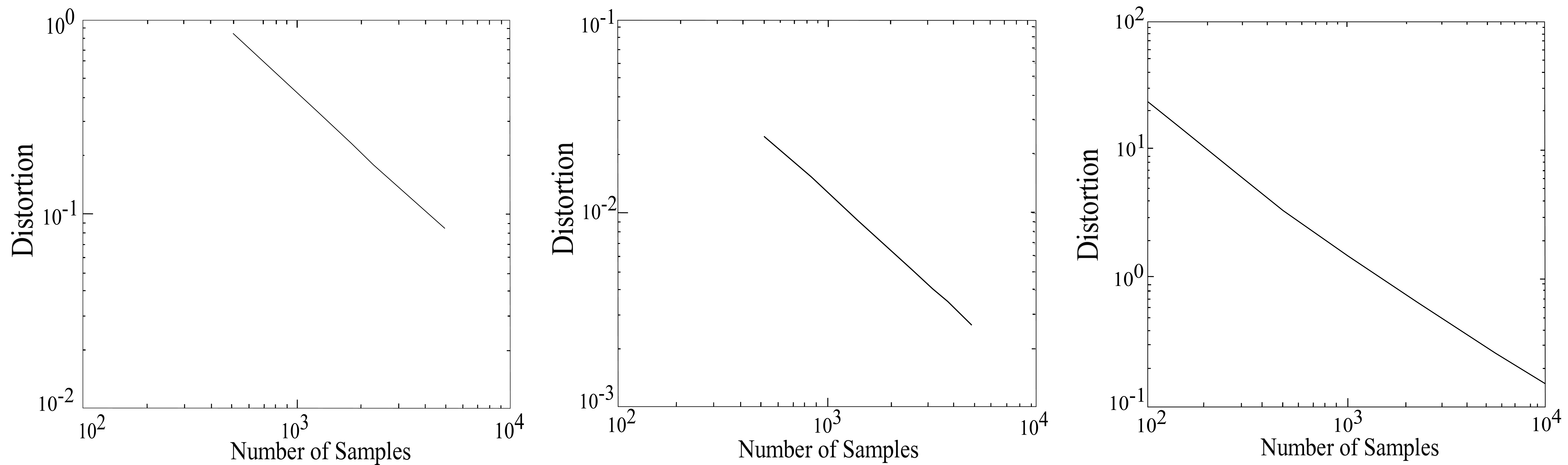}
\label{Fig :Simulations} \caption{ The three different figures show the
variation of error with the number of samples for different PDEs. The first one
is the corresponding to first PDE in equation \eqref{PDEs}, the middle one
corresponds to the second equation and the last one to the third equation in
\eqref{PDEs}, which is the standard diffusion equation. The variation in the
distortion is clearly of $O(1/n)$ as denoted by the slopes in the plots.
However, the error is slightly large because of the large condition number of
the matrix $Y_0$ giving issues regarding numerical stability.}
\end{figure*} 
The Figure \ref{Fig :Simulations} shows the mean square error of the estimate
for different PDEs. The plots are shown for the different PDEs as they have been
listed in the equation \ref{PDEs}. The slopes of the lines obtained are $-
1.0019, -1.0110$ and $-1.0086$ which confirms the $O(1/n)$ decrease.

\section{Conclusions}
\label{sec:conclusion}
The sampling of spatially bandlimited field evolving according to the constant
coefficient linear partial differential equation using a mobile sensor was
studied. The field was estimated using the noisy samples obtained at unknown
locations and time instants obtained from two independent and unknown renewal
processes and it was shown that the mean squared error between the estimated
field and the true field decreased as $O(1/n)$, where $n$ was the average
sampling density. The main idea that was leveraged was the fact that the
locations of the samples got closer to the ones corresponding to uniform
sampling as the sampling density increased and thus oversampling was used to
reduce the error.

\section*{Appendix A} 
\label{sec:appendix_a}
This appendix mainly deals with proving the bandlimitedness of the field for all
$t \geq 0$ given that the field was spatially bandlimited at $ t = 0$. Spatial
bandlimitedness, as defined previously, refers to the fact that the Fourier
series of the field over its spatial support has finite number of terms.The main
result shown is that for a field evolving according to the equation
\eqref{diff_eq_polynomial}, if it is known that the field and its $m - 1$,
temporal derivatives are spatially bandlimited at $t = 0$, then the field will
always remain bandlimited. Here $m$ is the degree of the polynomial $p$. A major
part of this proof will be similar to the approach in Section
\ref{sec:field_estimation_from_samples} and has been reproduced here for ease of
understanding. Since the field is assumed to have a finite support, the field
can be written as 
\begin{align}
g(x,t) = \sum_{k = \infty}^{\infty} a_k(t) \exp (j 2 \pi k x) ) \ ; \ a_k(t) =
\int_{-\infty}^{\infty} g(x,t) \exp(-j 2\pi k x) dx
\end{align}
Substituting this in the equation \eqref{diff_eq}, and using \eqref{polynomials}
along with the orthogonality property for the Fourier basis we can write,
\begin{align}
\sum_{i = 0}^m p_i \frac{\partial^i}{\partial t^i} \left( \sum_{k =
-\infty}^{\infty} a_k(t) \exp (j 2 \pi k x) \right) = \sum_{i = 0}^n q_i
\frac{\partial^i}{\partial x^i} \left(  \sum_{k = -\infty}^{\infty} a_k(t) \exp
(j 2 \pi k x) \right) \nonumber\\
\implies \sum_{k = -\infty}^{\infty} \left( \sum_{i = 0}^m p_i \frac{\partial^i
a_k(t)}{\partial t^i} \right) \exp (j 2 \pi k x) = \sum_{k = -\infty}^{\infty}
a_k(t) \left( \sum_{i = 0}^n q_i (j 2\pi k )^i \right) \exp (j 2 \pi k x)
\nonumber \\
\overset{\text{(a)}}{\implies} \sum_{k = -\infty}^{\infty} \left( \sum_{i = 0}^m
p_i \frac{\partial^i a_k(t)}{\partial t^i} \right) \exp (j 2 \pi k x) = \sum_{k
= -\infty}^{\infty} a_k(t) q(j 2 \pi k) \exp (j 2 \pi k x) \nonumber \\
\overset{\text{(b)}}{\implies}  \sum_{i = 0}^m p_i \frac{\partial^i
a_k(t)}{\partial t^i}  - q(j 2 \pi k) a_k(t) = 0 \  \ \forall k \in \mathbb{Z}
\end{align}
Essentially, each $a_k(t)$ evolves by an ordinary differential equation (ODE)
with constant coefficients. The solution of ODE with constant coefficients is
well known (via unilateral Laplace transform \cite{}). For each $k$, this leads
to the polynomial equation, 
\begin{align}\label{r_poly_appendix}
  \sum_{i = 0}^m p_i \frac{\partial^i Ae^{rt}}{\partial t^i}  - q(j 2 \pi k)
Ae^{rt} = 0 \nonumber \\ 
  \implies \left(\sum_{i = 0}^m p_i r^i  - q(j 2 \pi k) \right) Ae^{rt} = 0
\nonumber\\
  \implies p(r) - q(j 2 \pi k) = 0 
\end{align} 
The solution for $a_k(t)$ is a of the form $Ae^{rt}$, where $r$ is the root of
the above polynomial and $A$ is a constant independent of $t$. Let the roots of
the above polynomial be $r_1(k), r_2(k), \dots r_m(k)$. Note that the roots of
the polynomial are indexed by $k$ as well, implying there is a set of $m$ roots
for each value of $k$. Furthermore for simplicity of analysis, all of the roots
$r_1(k), r_2(k), \dots r_m(k)$ are considered to be distinct for a given $k$ as
in Section \ref{sec:field_estimation_from_samples}. However, it is possible that
$r_i(k_1) = r_j(k_2)$ for some $i, j, k_1 \neq k_2$.  In fact, the solution for
$a_k(t)$ can thus be written as a linear combination of these roots. Thus
$a_k(t) = \sum_{i = 1}^m a_{ki}(0)\exp(r_i(k) t)$. The coefficients have been
chosen to maintain consistency of represent as a function of time. Now, we know
that the field is bandlimited and so are its $m-1$ partial derivatives at $t =
0$. This can be written for $i = 0, 1, 2 \dots m-1$
\begin{align}
\frac{\partial^i a_k(t)}{\partial t^i}\bigg|_{t = 0} =
\begin{cases}
c_{ki}   & \quad \text{if }|k| \leq b\\
0  & \quad \text{otherwise} \\
\end{cases}
\end{align}
where $c_{ki}$'s are real constants. Since $a_k(t) = \sum_{i = 1}^m
a_{ki}(0)\exp(r_i(k) t)$, therefore,
\begin{align}\label{partial_derivs}
\frac{\partial^j a_k(t)}{\partial t^j} = \sum_{i = 1}^m
a_{ki}(0)r_i^j(k)\exp(r_i(k) t) \ \forall \ j \geq 0
\end{align}
Consider $k'$ in the range $|k| > b$. Then for $k'$ and $\forall \ i = 0, 1, 2
\dots m$, we have $\displaystyle \frac{\partial^i a_{k'}(t)}{\partial
t^i}\bigg|_{t = 0} = 0$. Then for $k'$ using equation \eqref{partial_derivs}, we
can combine all the equations for all $i$ and the resulting expression can be
written in matrix form as, 
\begin{align}
\begin{bmatrix}
1          & 1          & \dots  & 1\\
r_1(k')   & r_2(k')   & \dots  & r_m(k') \\
r_1^2(k') & r_2^2(k') & \ddots & r_m^2(k') \\
\vdots     & \vdots     & \ddots & \vdots \\
r_1^m(k') & r_2^m(k') & \dots  & r_m^m(k') \\
\end{bmatrix}
\begin{bmatrix}
a_{{k'}1}   \\
a_{{k'}2} \\
\vdots  \\
a_{{k'}m}  \\
\end{bmatrix} 
 = 0
\end{align}
Since the roots are assumed to be distinct, therefore the matrix on the left is
a Vandermonde matrix and is always invertible\cite{vandermode}. This means that
$a_{k'i} = 0 \ \forall i = 1,2,3 \dots m$ and all $|k'|> b$. This implies that
the field is bandlimited, i.e., $a_k(t) \equiv 0$ for all $|k| > b$.
%
%

\section*{Appendix B} 
\label{sec:appendix_b}

This section primarily deals with establishing upper bound on the terms
$\displaystyle \mathbb{E}\left[ \frac{1}{M}\sum_{i = 1}^M \bigg|S_i -
\frac{i}{M}\bigg|^2 \right] $ and $ \displaystyle \mathbb{E}\left[
\frac{1}{M}\sum_{i = 1}^M \bigg|T_i - \frac{iT_0}{M}\bigg|^2 \right] $. The
renewal based sampling model for the spatial terms is the same that has been
considered in \cite{unknown_loc}. Moreover, the bound on the term $\displaystyle
\mathbb{E}\left[ \frac{1}{M}\sum_{i = 1}^M \bigg|S_i - \frac{i}{M}\bigg|^2
\right] $ has been elaborately derived there(\cite{unknown_loc}, Appendix A).
Using that we have the bound, 
\begin{align}
 \mathbb{E}\left[ \frac{1}{M}\sum_{i = 1}^M \bigg|S_i - \frac{i}{M}\bigg|^2
\right] \leq \frac{C_S}{n}
\end{align}
for some $C_S > 0$ and independent of $n$. The proof for the other term follows
in a similar manner. Define, 
\begin{align}
y_m := \mathbb{E}\left[ \left(N_1 - \frac{T_0}{M}\right)^2 \bigg| M = m \right]
\nonumber \\
z_m := \mathbb{E}\left[ \left(N_1 - \frac{T_0}{M}\right)\left(N_2 -
\frac{T_0}{M}\right) \bigg| M = m \right]
\end{align}
From equation $(31)$ in \cite{unknown_loc}, we have 
\begin{align}\label{AppendixB_1}
\mathbb{E}\left[ \frac{1}{M}\sum_{i = 1}^M \bigg|T_i - \frac{iT_0}{M}\bigg|^2
\bigg| M = m\right] = \frac{m+1}{2}y_m + \frac{m^2 -1}{3}z_m
\end{align}
Consider the expression,
\begin{align}
\mathbb{E}\big[(T_M - T_0)^2 | M = m\big] & = \mathbb{E}\left[ \bigg\{\sum_{i =
1}^M \left(N_i - \frac{T_0}{M}\right)\bigg\}^2 \bigg| M = m\right]\nonumber \\
& = my_m + m(m-1)z_m 
\end{align}
where the second step is obtained from evaluating and rearranging the expression
along the exchangeability of $N_i$'s. Since we know that, $T_M \leq T_0$ and
$T_{M+1} > T_0$, define 
\begin{align}
J_M = T_0 - T_M \implies J_M < T_{M+1} -T_M \leq \frac{\mu}{n}
\end{align}
Also $\displaystyle J_M^2 = (T_M - T_0)^2 \implies \mathbb{E}\big[(T_M - T_0)^2
| M = m\big] = \mathbb{E}[J_M^2 | M =m] \leq \frac{\mu^2}{n^2} $.  Using the
above two results, we can conclude that, $\displaystyle my_m + m(m-1)z_m =
\mathbb{E}[J_M^2 | M =m]$. Combining this with \eqref{AppendixB_1}, we can
write,
\begin{align}
\mathbb{E}\left[ \frac{1}{M}\sum_{i = 1}^M \bigg|T_i - \frac{iT_0}{M}\bigg|^2
\bigg| M = m\right] & = \frac{m+1}{2}y_m + \frac{m^2 -1}{3m(m-1)}\left(-my_m +
\mathbb{E}[J_M^2 | M =m]\right) \nonumber \\
& = \frac{m+1}{2}y_m + \frac{m + 1}{3m}\mathbb{E}[J_M^2 | M =m] \nonumber \\
& \leq \frac{m+1}{2}y_m + \frac{2}{3} \frac{\mu^2}{n^2}
\end{align}
This is exactly the same result as obtained for $\displaystyle \mathbb{E}\left[
\frac{1}{M}\sum_{i = 1}^M \bigg|S_i - \frac{i}{M}\bigg|^2 \right]$ in
\cite{unknown_loc}. Using the same steps for the expression in $S_i$, we can
conclude,
\begin{align}
\mathbb{E}\left[ \frac{1}{M}\sum_{i = 1}^M \bigg|T_i - \frac{iT_0}{M}\bigg|^2
\bigg| M = m\right] \leq \frac{C_T}{n}
\end{align}
An important thing to note here is that even if $J_M \leq \frac{K\mu}{n}$ for
some positive constant $K$, the result will hold. Interestingly, $K$, can be
$O(\sqrt{n})$, and still the result will hold. The idea is that the difference
between $T_0$ and $T_M$ should of $O(1/\sqrt{n})$, that is $T_0$ cannot be
simply any number larger than $T_M$. It has to be a reasonably accurate
estimation of the time taken. It is important to know this to help decide the
construction of $Y_0$ because the entire idea is based on the assumption that
the samples are ``near''the grid points. But to determine the grid points, we
must have the knowledge of the support of the function which has to be finite.


\begin{thebibliography}{1}

\bibitem{Nehorai}
A.~Nehorai, B.~Porat, and E~Paldi,
\newblock{\em ``Detection and localization of vapor-emitting sources'', IEEE Trans. Signal Process.},
\newblock vol. 43, no. 1, pp. 243-253, 1995.

\bibitem{super_res}
Y.~M.~Lu and M.~Vetterli,
\newblock{\em ``Spatial super-resolution of a diffusion field by temporal
oversampling in sensor networks'', in IEEE Int. Conf. on Acoustics, Speech and
Signal Proc.},
\newblock Taiwan, 2009, pp. 2249-2252.


\bibitem{sparse_sources}
Y.~M.~Lu and M.~Vetterli, 
\newblock{\em ``Distributed Spatio-Temporal Sampling of Diffusion Fields from
Sparse Instantaneous Sources'', in Proc. 3rd Int. Workshop on Comp Adv. in Multi-Sensor Adaptive Proc}.
\newblock 2009.


\bibitem{localized_sources}
John Murray-Bruce, Pier Luigi Dragotti,
\newblock{\em ``Estimating Localized Sources of Diffusion Fields Using
Spatiotemporal Sensor Measurements'' IEEE Transactions on Signal Processing},
\newblock vol.~63, pp.~3018-3031, 2015.

\bibitem{Ranieri}
J. Ranieri, A. Chebira, Y. M. Lu and M. Vetterli,
\newblock{\em "Sampling and reconstructing diffusion fields with localized sources," 2011 IEEE International Conference on Acoustics, Speech and Signal Processing (ICASSP),}
\newblock Prague, Czech Republic, 2011, pp. 4016-4019.

\bibitem{Dokmanic}
J Ranieri, I Dokmanic, A Chebira, and M Vetterli,
\newblock{\em ``Sampling and reconstruction of time-varying atmospheric
emissions'', in IEEE Int. Conf. on Acoustics,Speech and Signal Proc.},
\newblock Kyoto, 2012, pp. 3673-3676.

\bibitem{FEM}
T van Waterschoot and G Leus, 
\newblock{\em ``Static field estimation using a wireless sensor network based on the finite element method'', Computational Advances in MultiSensor
Adaptive Processing (CAMSAP), 2011 4th IEEE International Workshop on},
\newblock pp. 369-372, 2011.

\bibitem{FEM_1}
T van Waterschoot and G Leus, 
\newblock{\em ``Distributed estimation of static fields in wireless sensor networks using the finite element method'', in IEEE Int. Conf. on Acoustics,
Speech and Signal Proc.},
\newblock Kyoto, 2012, pp. 2853-2856.

\bibitem{aliasing}
Juri Ranieri, Martin Vetterli, 
\newblock{\em "Sampling and reconstructing diffusion fields in presence of aliasing", Acoustics Speech and Signal Processing (ICASSP) 2013 IEEE International Conference on},
\newblock pp. 5474-5478, 2013, ISSN 1520-6149.

\bibitem{Vett_samp_orig1}
J. Unnikrishnan and M. Vetterli,
\newblock{\em ``Sampling and reconstructing spatial fields using mobile sensors'', in Proc. IEEE ICASSP,}
\newblock New York NY, USA, Mar. 2012, pp. 3789-3792.

\bibitem{Vett_samp_orig2}
J. Unnikrishnan and M. Vetterli,
\newblock{\em ``Sampling and reconstruction of spatial fields using mobile sensors'', IEEE Trans. Signal Process.},
\newblock vol. 61, no. 9, pp. 2328-2340, May 2013.

\bibitem{samp_interp_1}
A. Papoulis, 
\newblock{\em ``Error analysis in sampling theory'', Proc. IEEE},
\newblock vol. 54, no. 7, pp. 947-955, Jul. 1966.

\bibitem{samp_interp_2}
A. J. Jerri, 
\newblock{\em ``The Shannon sampling theorem—Its various extensions and applications: A tutorial review” Proc. IEEE},
\newblock vol. 65, no. 11, pp. 1565-1594, Nov. 1977.

\bibitem{samp_interp_3}
F. Marvasti (ed.),
\newblock{\em Nonuniform Sampling}.
\newblock New York, NY, USA: Kluwer, 2001.

\bibitem{known_loc_1}
M. S. Pinsker, 
\newblock{\em ``Optimal filtering of square-integrable signals in Gaussian noise'', Problemy Peredachi Inf.},
\newblock vol. 16, no. 2, pp. 52-68, Apr. 1980. 

\bibitem{known_loc_2}
E. Masry, 
\newblock{\em ``The reconstruction of analog signals from the sign of their noisy samples'', IEEE Trans. Inf. Theory},
\newblock vol. 27, no. 6, pp. 735-745, Nov. 1981.

\bibitem{known_loc_3}
O. Dabeer and A. Karnik, 
\newblock{\em ``Signal parameter estimation using 1-bit dithered quantization'', IEEE Trans. Inf. Theory},
\newblock vol. 52, no. 12, pp. 5389-5405, Dec. 2006.

\bibitem{known_loc_4}
E. Masry and P. Ishwar, 
\newblock{\em ``Field estimation from randomly located binary noisy sensors'', IEEE Trans. Inf. Theory},
\newblock vol. 55, no. 11, pp. 5197-5210, Nov. 2009.

\bibitem{known_loc_5}
Y. Wang and P. Ishwar, 
\newblock{\em ``Distributed field estimation with randomly deployed, noisy, binary sensors'', IEEE Trans. Signal Process.},
\newblock vol. 57, no. 3, pp. 1177-1189, Mar. 2009.

\bibitem{known_loc_6}
A. Kumar and V. M. Prabhakaran,
\newblock{\em ``Estimation of bandlimited signals from the signs of noisy samples'', in Proc. IEEE Int. Conf. Acoust. Speech Signal Process. (ICASSP)},
\newblock May 2013, pp. 5815-5819.

\bibitem{prior_art_1}
P. Marziliano and M. Vetterli,
\newblock{\em ``Reconstruction of irregularly sampled discrete-time bandlimited signals with unknown sampling locations'', IEEE Trans. Signal Process.},
\newblock vol. 48, no. 12, pp. 3462-3471, Dec. 2000.

\bibitem{prior_art_2}
J. Browning, 
\newblock{\em ``Approximating signals from nonuniform continuous time samples at unknown locations'', IEEE Trans. Signal Process.},
\newblock vol. 55, no. 4, pp. 1549-1554, Apr. 2007.

\bibitem{prior_art_3}
A. Nordio, C. F. Chiasserini, and E. Viterbo, 
\newblock{\em ``Performance of linear field reconstruction techniques with noise
and uncertain sensor locations'', IEEE Trans.~on Signal Processing},
\newblock vol. 56, no. 8, pp. 3535-3547, Aug. 2008.

\bibitem{prior_art_4}
A. I. Zayed, 
\newblock{\em Advances in Shannon's Sampling Theory.},
\newblock Boca Raton, FL, USA: CRC Press, 1993.

\bibitem{prior_art_5}
A. Mallick and A. Kumar, 
\newblock{\em ``Bandlimited field reconstruction from samples obtained on a discrete grid with unknown random locations'', in Proc. IEEE Int. Conf. Acoust. Speech Signal Process. (ICASSP)},
\newblock Mar. 2016, pp. 4014-4018.

\bibitem{prior_art_6}
A. Kumar, 
\newblock{\em ``On bandlimited signal reconstruction from the distribution of unknown sampling locations'', IEEE Trans. Signal Process.},
\newblock vol. 63, no. 5, pp. 1259-1267, Mar. 2015.

\bibitem{unknown_loc}
A. Kumar,
\newblock{\em "On bandlimited field estimation from samples recorded by a location-unaware mobile sensor", IEEE Trans. Inf. Theory},
\newblock vol. 63, no. 4, pp.2188-2200, Apr. 2017


\bibitem{ill_conditioned}
C Hansen, 
\newblock{\em Rank-Deficient and Discrete Ill-Posed Problems: Numerical Aspects of Linear Inversion},
\newblock Soc. for Industrial Math., 1987.

\bibitem{bernstein1}
G. H. Hardy, J. E. Littlewood, and G. Polya,
\newblock{\em Inequalities}.
\newblock London, U.K.: Cambridge Univ. Press, 1959.

\bibitem{meas_var}
R. Durrett, Probability:
\newblock{\em  Theory and Examples},
\newblock 2nd ed. Belmont, CA, USA: Duxbury Press, 1996.

\bibitem{vandermode}
W. Gautschi, 
\newblock{\em ``On inverses of vandermonde and confluent vandermonde matrices'', Numerische Mathematik},
\newblock vol. 4, no. 1, pp. 117-123, 1962.

\bibitem{pg_ineq}
G. Polya and G. Szego,
\newblock{\em Aufgaben und Lehrs atze aus der Analysis},
\newblock Vol. 1, Berlin 1925, pp. 57 and 213-214.

\bibitem{cond_number}
A. Moitra, 
\newblock{\em ``Super-resolution, extremal functions and the condition number of
Vandermonde matrices'', Proceedings of the 47th Annual ACM Symposium
on Theory of Computing (STOC)},
\newblock Portland, OR, USA, 2015.

\end{thebibliography}
\end{document}